\begin{document}
\title{Cascading Node Failure with Continuous States in Random Geometric Networks}

\author{\IEEEauthorblockN{Khashayar Kamran\IEEEauthorrefmark{1} and
Edmund Yeh\IEEEauthorrefmark{2}}
\IEEEauthorblockA{Department of Electrical and Computer Engineering\\
Northeastern University\\
Email: \IEEEauthorrefmark{1}kamran.k@husky.neu.edu,
\IEEEauthorrefmark{2}eyeh@ece.neu.edu}}

\maketitle

\allowdisplaybreaks

\newtheorem{Thm}{\textbf{Theorem}}
\newtheorem{Lem}{\textbf{Lemma}}
\newtheorem{Cor}{\textbf{Corollary}}
\newtheorem{Def}{\textbf{Definition}}
\newtheorem{Exam}{\textbf{Example}}
\newtheorem{Alg}{\textbf{Algorithm}}
\newtheorem{Sch}{\textbf{Scheme}}
\newtheorem{Prob}{\textbf{Problem}}
\newtheorem{Rem}{\textbf{Remark}}
\newtheorem{Prp}{\textbf{Proposition}}
\newtheorem{Asump}{\textbf{Assumption}}
\newtheorem{Subp}{\textbf{Subproblem}}

\begin{abstract}
The increasing complexity and interdependency of today's networks highlight the importance of studying network robustness to failure and attacks. 
Many large-scale networks are prone to cascading effects where a limited number of initial failures (due to attacks, natural hazards or resource depletion) propagate through a dependent mechanism, ultimately leading to a global failure scenario where a substantial fraction of the network loses its functionality.  These cascading failure scenarios often take place in networks which are embedded in space and constrained by geometry.  Building on previous results on cascading failure in random geometric networks, we introduce and analyze a continuous cascading failure model where a node has an initial continuously-valued state, and fails if the aggregate state of its neighbors fall below a threshold.  Within this model, we derive analytical conditions for the occurrence and non-occurrence of cascading node failure, respectively.   
\end{abstract}

\begin{IEEEkeywords}
Cascading Failure, Epidemics, Percolation, Degree Dependent Failure, Random Geometric Networks
\end{IEEEkeywords}

\section{Introduction}

Many large-scale networks are prone to cascading effects where a limited number of initial node and/or link failures (due to attacks, natural hazards or resource depletion) lead to further failures through a dependent mechanism, ultimately leading to a global failure scenario where a substantial fraction of the network nodes and/or links lose their functionality.  Such cascading effects can be observed in electrical power networks, communication networks, social networks, and biological networks.  In electrical power networks, the failure of a few power lines carrying load can result in redistribution of the load onto other lines, which may fail due to excessive load, thus potentially leading to a cascading power blackout~\cite{P7}.  In communication networks, a similar phenomenon may occur as the failure of a few communication nodes and/or links causes traffic to be redirected to other links, causing additional congestion, which may snowball into a global state of ``congestion collapse."  In social and economic networks, initial adoption of opinions or decisions by a limited number of individuals may spread to others through social influencing mechanisms, resulting in widespread adoption on a global scale~\cite{P2}. Cascading effects can be observed not only in single networks, but also in interdependent networks where the failure of a small number of nodes in just one network lead to failures in the other network, and vice versa, resulting in a significant part of the interdependent network becoming unavailable.  A classical example of this is the interdependence between the power grid and the communication infrastructure, where power failures can lead to communication failures, which in turn, lead to cascading power failures~\cite{P3}.  

Three salient features arise from the cascading failure scenarios discussed above.  First, the state of a node or link typically depends on the states of multiple neighbors.  This is true not only in single networks but also in interdependent networks, where for instance, a node in the communication network may receive power from several substations in the power network and a power substation may be controlled by several communication nodes.   Second, in predicting the occurrence or non-occurrence of cascading failure, it is often necessary to describe the functional state of a node or link in more refined terms than simply binary.  For example, a communication node receiving power from several power substations may fail only if the aggregate power delivered by the power substations is below a required threshold.   Likewise, a power node controlled by several communication nodes may fail only if the aggregate communication bandwidth from the communication nodes falls below a threshold.  In these cases, the state of a functioning node is more usefully described by a continuously-valued parameter (e.g. power supplied or bandwidth offered).   Third, in assessing cascades in networks spatial constraints are often important considerations.  Real networks such as power grids, communication networks, and biological networks are embedded in space, highlighting the role of geometry.  For instance, in wireless communication networks, worms and viruses planted in mobile devices can infect geographically nearby mobile devices, potentially propagating a ``wireless epidemic" in the manner of biological epidemics~\cite{P4}.  The importance of geometry extends to interdependent networks.  In the scenario observed in Italy in September 2003, where an iterative cascade took place between power and communication networks, every internet server in the communication network is connected to the
geographically closest power station~\cite{P3}.

To address the key issues discussed above, we present and analyze a new continuous model for cascading node failure in large-scale networks.  In this model, each node has an initial state represented by a continuously-valued parameter.  This parameter reflects, for instance, the resource level (e.g. power or bandwidth) that the node can provide to its neighboring nodes.   We examine failure scenarios which start with an initial node failure (resulting from attacks, natural hazards or resource depletion).  When a node fails, its state falls to zero.  The dependent mechanism by which failures propagate in the network is described as follows.   Each node has a continuously-valued threshold which describes its requirement for resources from its neighboring nodes.  A node fails if and only if it has at least one failed neighbor and the sum of the states of its neighboring nodes falls below its threshold.   Such a failure mechanism is applicable to both single networks and interdependent networks.  In this paper, we focus on cascading failure scenarios under the given mechanism in the context of single networks.  To reflect important spatial constraints, we shall examine cascading node failure within the context of random geometric graphs (RGG), in which nodes are spatially distributed according to Poisson point process of a given density and two nodes share a link if their distance is below a given distance parameter.  We will develop analytical conditions for the occurrence or non-occurrence of cascading failure under the continuous failure model for random geometric networks.   

Cascading failure in random geometric networks was first studied in \cite{P5}, where every node has binary state (either functional or failed) and fails if the fraction of its neighboring nodes which have failed exceeds a given threshold.   A degree-dependent percolation model is used to find analytical conditions for the occurrence of non-occurrence of cascading node failure~\cite{P5}.  Cascading link failure in random geometric networks is analyzed in~\cite{P6}, which uses a mapping from links in a RGG to nodes in corresponding dual covering graph to derive analytical results on degree-dependent and cascading link failure.  The continuous cascading failure model presented in this paper can be seen as a generalization of the degree-dependent models presented in~\cite{P5, P6}.  Note that while we focus on cascading node failures in this paper, the results may be applied to cascading link failure scenarios using the techniques of~\cite{P6}. 

Cascading failure in networks where nodes have continuously-valued state have been studied in~\cite{P7, P8}.  In these papers, the state of a node is defined as the load it carries.  When a node failed, its load is added to the loads of other nodes.  For instance, in~\cite{P8}, loads are transmitted between pairs of nodes through the shortest path between them.  When a node fails along with its adjacent links, shortest paths must change and the load on the new shortest paths increase.  When the load on a node on the new shortest paths exceeds a given capacity, the node fails and this failure may propagate, leading to cascading failure in the network.  The continuous cascading failure model presented in this paper considers a different mechanism from those in~\cite{P7, P8}.  Furthermore, we consider practical spatial constraints through modeling of random geometric graphs. 

We present the continuous failure model in Section II, identify the special role played by ``highly vulnerable" and ``highly reliable" nodes, and derive analytical conditions for the occurrence and non-occurrence of cascading node failure (respectively) in RGGs under the continuous failure model.  We then evaluate the conclusions of the analysis via numerical simulations in Section III.  

\section{Network and Continuous Failure Model}
\subsection{Network Model}

Consider a large-scale network modeled by an (infinite) RGG \(G(\mathcal{H}_\lambda,r)\), where the vertex set \(\mathcal{H}_\lambda\) is a homogeneous Poisson point process in \(\mathbb{R}^2\) with density \(\lambda > 0\), and two nodes \(X_i,X_j \in \mathcal{H}_\lambda\) share an undirected link if and only if \( ||X_i - X_j|| \leq r~(r > 0)\), with  \( ||X_i - X_j||\) being the Euclidean distance between \(X_i\) and \(X_j\).  Since RGGs have the scaling property \cite{P10, P11}, we focus on \( G(\mathcal{H}_\lambda,1)\) in the following. 

Let $\mathcal{H}_{\lambda, {\bf 0}} =  \mathcal{H}_{\lambda} \cup \{\bf 0\}$ be the union of the origin and $\mathcal{H}_{\lambda}$. It is well known from the study of continuum percolation that there exists a critical density $\lambda_c$, $0 < \lambda_c < \infty$, such that for $\lambda < \lambda_c$, the component containing the origin in the graph $G(\mathcal{H}_{\lambda, {\bf 0}},1)$ has a finite number of nodes almost surely, while for $\lambda > \lambda_c$, the component containing the origin has an infinite number of nodes with positive probability.  Furthermore it is known that if $\lambda > \lambda_c$, there exists a {\em unique} infinite component in \( G( \mathcal{H}_\lambda,1)\) \cite{P9, P10, P11, P12}. 

For analytical convenience, we consider RGG in the infinite graph setting.  One can analogously define a finite RGG $G({\mathcal{X}	}_n, 1)$ with $n$ nodes.  It can be shown that if $\lambda > \lambda_c$, there exists a unique largest component of size $\Theta (n)$ in $G({\cal X}_n, 1)$, called the giant component (GC). If $\lambda < \lambda_c$, the largest component in $G({\mathcal{X}}_n, 1)$ can have only size $O(\log n)$~\cite{P11}. 

\subsection{Continuous Failure Model}

We now investigate the cascading failure phenomenon in random geometric networks under a continuous failure model.  Assume all nodes in $G(\mathcal{H}_{\lambda},1)$, where $\lambda > \lambda_c$, are initially functional.   Now consider an initial failure seed represented by a single failed node.   The question is whether this initial failure can lead, via the dependent failure mechanism to be discussed below, to a global cascade of failures in $G(\mathcal{H}_{\lambda},1)$.  As discussed in~\cite{P5}, a cascading failure can be characterized by establishing whether the network has been affected in a global manner rather than in an isolated local manner.  This is a notion which can be appropriately captured using the concept of percolation.  
\begin{Def}~\cite{P5}
A cascading failure is an ordered sequence of node failures triggered by an initial failure, resulting in an infinite component of failed nodes in the network.
\end{Def}

For each node $j  \in \mathcal{H}_\lambda$, associate a continuously-valued state $S_j \in [0,1]$ representing, for instance, the resource level that node $j$ can provide to its neighboring nodes.  Also associate with each node $j$ a continuously-valued threshold $\phi_j \in [0,\infty)$ representing the resource level required by node $j$ to stay functional.  We shall assume that the initial node states $S'_j$ (before any failures) are i.i.d. random variables taking values in $(0,1]$ with probability density function $f_{S'}$ and that the $\phi_j$'s are i.i.d. random variables with probability density $f_\phi$.  Furthermore, assume that $\{S'_j\}$ and and  $\{\phi_j\}$ are mutually independent.

\textbf{Node Failure Model:}  If node \(j \) has at least one failed neighbor, and \( \sum_{k \in \mathcal{K}_j}  S_k < \phi_j\), where \(\mathcal{K}_j\) is the set of neighbors of \(j\), then node \(j\) fails.  The state of node $j$ is zero $(S_j = 0)$ if and only if $j$ is a failed node.  Otherwise, node $j$ is said to be functional.

\subsection{Analysis of Cascading Failure}

In~\cite{P5}, it is shown that the set of {\em vulnerable} nodes and the set of {\em reliable} nodes play key roles in determining the occurrence or non-occurrence of cascading node failure in RGGs with a binary degree-dependent failure model.  In the context of~\cite{P5}, a vulnerable node is a node which fails if just one of its neighbors fails.  A reliable node is a node which does not fail unless all its neighbors fail.  An {\em unreliable} node is a node which is not reliable.

It turns out that similar notions are key for determining the occurrence or non-occurrence of cascading node failure within the proposed continuous-state failure model. 
\begin{Def}
A {\em highly vulnerable} node is a node whose threshold satisfies
\begin{small}
\begin{equation*}
\phi > k - \displaystyle\min_{i \in \mathcal{K}} \{S_i'\},
\end{equation*}
\end{small}
where $\mathcal{K}$ is the set of the neighbors of the node, and $k = |\mathcal{K}|$.

\end{Def}
\begin{Def}
A {\em highly reliable} node is a node whose threshold satisfies
\begin{small}
\begin{equation*}
\phi \leq \displaystyle\min_{i \in \mathcal{K}} \{S'_i\}.
\end{equation*}
\end{small}
A node which is not highly reliable is said to be a {\em weak} node.
\end{Def}

From the above definitions, it is clear that a highly vulnerable node fails if any one of its neighbors fails.  Thus, the set of highly vulnerable nodes is a subset of the vulnerable nodes.
\footnote{Note that \(\phi > k - 1\) is also sufficient for a node to be vulnerable.  Indeed, this may seem to be a tighter condition than the condition given in Definition~2.  It turns out, however, that \(\phi > k - 1\) does not lead to a useful sufficient condition for a GC of vulnerable nodes.  One reason is that the condition does not involve the distribution of $S'$.}
Similarly, a highly reliable node remains functional as long as one of its neighbors is functional.  Thus, the set of highly reliable nodes is a subset of the reliable nodes.

\begin{Lem}
The probability of a node being highly vulnerable is
\begin{small}
\begin{equation*}
\rho_k = \bigg[1-\int _{0} ^{k}F_{S'}(k-\varphi)f_\phi(\varphi)d\varphi\bigg]^k,
\end{equation*}
\end{small}
where $k$ is the number of neighbors of the node, $F_{S'}$ is the distribution function 
of the initial state $S'$, and $f_\phi$ is the probability density function of the threshold
$\phi$.  The probability of a node being highly reliable is
\begin{small}
\begin{equation*}
\sigma_k = \bigg[1 - \int _{0} ^{\infty}F_{S'}(\varphi)f_\phi(\varphi)d\varphi \bigg]^k.
\end{equation*}
\end{small}
\end{Lem}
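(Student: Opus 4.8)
The plan is to translate each of the two defining inequalities into a simultaneous condition on the individual neighbor states, and then combine the resulting per-neighbor probabilities using the assumptions that $\{S'_j\}$ are i.i.d.\ and independent of $\{\phi_j\}$. Throughout, fix a node with neighbor set $\mathcal{K}$ and $k=|\mathcal{K}|$, and let $F_{S'}$ denote the distribution function of $S'$.

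For the highly vulnerable case, I would first rewrite $\phi > k-\min_{i\in\mathcal{K}}\{S'_i\}$ as $\min_{i\in\mathcal{K}}\{S'_i\} > k-\phi$, which holds exactly when $S'_i > k-\phi$ for \emph{every} neighbor $i$. The single-neighbor probability is obtained by conditioning on the threshold: given $\phi=\varphi$ one has $\Pr(S'_i > k-\varphi)=1-F_{S'}(k-\varphi)$, so averaging over $\varphi$ gives $\Pr(S'_i > k-\phi)=1-\int_{0}^{\infty}F_{S'}(k-\varphi)f_\phi(\varphi)\,d\varphi$; since $S'$ is supported on $(0,1]$ the integrand vanishes once $\varphi\ge k$, so the upper limit may be replaced by $k$. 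Combining the $k$ single-neighbor factors then gives $\rho_k$. The highly reliable case is handled the same way: $\phi\le\min_{i\in\mathcal{K}}\{S'_i\}$ is equivalent to $S'_i\ge\phi$ for all $i$; conditioning on $\phi=\varphi$ gives the single-neighbor probability $1-F_{S'}(\varphi)$ (using that $S'$ is continuously distributed), so de-conditioning gives $1-\int_{0}^{\infty}F_{S'}(\varphi)f_\phi(\varphi)\,d\varphi$, with no truncation of the range since the argument of $F_{S'}$ is now $\varphi$ itself; raising this to the $k$-th power gives $\sigma_k$.

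The step that deserves the most care is the passage from the joint event over all $k$ neighbors to the product of the single-neighbor probabilities: the $k$ local conditions all involve the \emph{same} threshold $\phi$, so the neighbor-level events are not obviously independent, and it is here that the i.i.d.\ structure of $\{S'_j\}$ and its independence from $\{\phi_j\}$ must be brought to bear. Everything else reduces to the elementary integration above, together with the observation that the integration limits differ between $\rho_k$ and $\sigma_k$ only because in the vulnerable case the argument of $F_{S'}$ becomes nonpositive (hence $F_{S'}$ vanishes) once $\varphi$ exceeds $k$.
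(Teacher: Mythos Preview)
Your outline is essentially the paper's own argument: rewrite the minimum condition as a simultaneous per-neighbor condition, compute the single-neighbor probability by integrating against $f_\phi$, raise to the $k$-th power, and then truncate the integral using the support of $S'$. The paper justifies the factorization in one line (``due to the mutual independence of the $S'_i$'s'') and afterwards conditions on $\phi$ to evaluate each factor, exactly as you propose.

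The worry you raise in your final paragraph is well placed, and the paper does not resolve it any more than you do. Because the $k$ events $\{S'_i>k-\phi\}$ all involve the \emph{same} random threshold $\phi$, they are only conditionally independent given $\phi$, not unconditionally independent. Conditioning first and then factoring yields
\[
\Pr\!\Big(\min_{i\in\mathcal{K}}S'_i>k-\phi\Big)=\int_{0}^{\infty}\big[1-F_{S'}(k-\varphi)\big]^{k}f_\phi(\varphi)\,d\varphi,
\]
which in general differs from $\big[\,1-\int_{0}^{k}F_{S'}(k-\varphi)f_\phi(\varphi)\,d\varphi\,\big]^{k}$ (by Jensen, the former is at least the latter, with equality only when $F_{S'}(k-\phi)$ is a.s.\ constant). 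The same remark applies verbatim to $\sigma_k$. So as far as matching the paper goes, you have reproduced its derivation faithfully; but the step you singled out as needing care is precisely where both your argument and the paper's rely on an independence claim that does not hold unconditionally.
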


\begin{proof}  We have
\begin{small}
\begin{align*}
\rho_k &= P_r \{\phi > k - \displaystyle\min_{i \in K} \{S_i'\}\} = P_r \{\displaystyle\min_{i \in K} \{S_i'\} >  k-\phi\} \\
&\overset{(a)}{=} P_r \{S' >  k-\phi\} ^k \\
&\overset{(b)}{=}   \bigg[ \int _{0} ^{\infty}P_r \{S' > k-\phi | \phi = \varphi\} f_\phi(\varphi)d\varphi  \bigg]^k \\
 &= \bigg[ \int _{0} ^{\infty}(1 - P_r \{S' \leq k-\varphi\}) f_\phi(\varphi)d\varphi  \bigg]^k \\
 &=  \bigg[ \int _{0} ^{\infty}f_\phi(\varphi)d\varphi     -     \int _{0} ^{\infty}F_{S'}(k-\varphi)f_\phi(\varphi)d\varphi  \bigg]^k \\
 &=  \bigg[1-\int _{0} ^{\infty}F_{S'}(k-\varphi)f_\phi(\varphi)d\varphi \bigg]^k \\
&= \bigg[1-\int _{0} ^{k}F_{S'}(k-\varphi)f_\phi(\varphi)d\varphi\bigg]^k\\
\end{align*}
\end{small}
Here, (a) is due to the mutual independence of  the \(S'_i\)'s and (b) is due to the mutual independence of the \(S'_i\)'s and \(\phi_i\)'s.  Next, we have
\begin{small}
\begin{align*}
\sigma_k &= P_r \{\phi \leq \displaystyle\min_{i \in K} \{S'_i\}\} = P_r\{ S' \geq \phi \}^k \\
&= \bigg[ \int _{0} ^{\infty}P_r\{S' \geq \phi | \phi = \varphi\}f_\phi(\varphi)d\varphi \bigg]^k \\
&=  \bigg[ \int _{0} ^{\infty}(1-P_r\{S' < \varphi\})f_\phi(\varphi) d\varphi \bigg]^k \\
&= \bigg[\int _{0} ^{\infty}f_\phi(\varphi)d\varphi -     \int _{0} ^{\infty}F_{S'}(\varphi)f_\phi(\varphi)d\varphi \bigg]^k \\
&= \bigg[1 - \int _{0} ^{\infty}F_{S'}(\varphi)f_\phi(\varphi)d\varphi \bigg]^k.
\end{align*}
\end{small}
\end{proof}

Note that the probabilities $\rho_k$ and $\sigma_k$ depend only on the degree $k$ and the distributions of $S'$ and $\phi$.  

We will now use the concepts of highly vulnerable and highly reliable nodes to derive a sufficient condition for the occurrence of cascading node failure and a sufficient condition for the non-occurrence of cascading node failure, respectively.





\begin{Thm}
For any \(\lambda_1 > \lambda_c\) and \(G(\mathcal{H}_\lambda,1)\) with \(\lambda > \lambda_1\), there exists \(k_0 < \infty\) depending on \( \lambda \) and \(\lambda_1 \) such that if
\begin{small}
\begin{equation*}
\bigg[1-\int _{0} ^{k}F_{S'}(k-\varphi)f_\phi(\varphi)d\varphi \bigg]^{k} \geq \frac{\lambda_1}{\lambda}
\end{equation*}
\end{small}
for all \(1 \leq k \leq k_0\),
then with probability 1, there exists an infinite component of highly vulnerable nodes in \(G(\mathcal{H}_\lambda,1)\).  Moreover, if the initial failure is inside this component or adjacent to some node in this component, then with probability 1, there is cascading node failure in  \(G(\mathcal{H}_\lambda,1)\).
\end{Thm}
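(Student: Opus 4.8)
The plan is to reduce the existence of an infinite component of highly vulnerable nodes to a continuum percolation statement, and then to use the uniqueness and denseness of the infinite component together with the failure mechanism to conclude that the cascade actually propagates. First I would set up a thinning argument: consider the RGG $G(\mathcal{H}_\lambda, 1)$ with $\lambda > \lambda_1 > \lambda_c$, and note that by Lemma~1 a node of degree $k$ is highly vulnerable with probability $\rho_k = \big[1 - \int_0^k F_{S'}(k-\varphi) f_\phi(\varphi)\,d\varphi\big]^k$, independently (conditioned on the point configuration) across nodes, since the $S'_j$ and $\phi_j$ are i.i.d. and mutually independent. The hypothesis guarantees $\rho_k \geq \lambda_1/\lambda$ for all $k \leq k_0$. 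The intuition is that if we keep only highly vulnerable nodes, we are performing a (degree-dependent) thinning of a rate-$\lambda$ Poisson process, and as long as the retention probability stays above $\lambda_1/\lambda$ wherever it matters, the retained process dominates (stochastically) a rate-$\lambda_1$ Poisson process, which percolates because $\lambda_1 > \lambda_c$.

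The subtlety — and what I expect to be the main obstacle — is that $\rho_k$ is controlled by the hypothesis only for $k \leq k_0$, whereas a Poisson point process has nodes of arbitrarily large degree, and for large $k$ the quantity $\rho_k$ may be tiny (indeed $\rho_k \to 0$ is typical). So a naive global thinning does not dominate a supercritical process. The standard way around this, following the approach in~\cite{P5}, is to choose $k_0$ large enough (depending on $\lambda$ and $\lambda_1$) that the nodes of degree exceeding $k_0$ are so sparse — by Poisson tail bounds on the number of points in a unit disk, $P(\deg > k_0)$ decays super-exponentially in $k_0$ — that deleting all of them still leaves a supercritical geometric graph. Concretely, I would: (i) fix $k_0$ so that removing every node with degree $> k_0$ from $G(\mathcal{H}_\lambda,1)$ leaves an infinite component a.s. (this uses a continuity/stability property of the percolation threshold under bounded-degree truncation, as established in the RGG literature and used in~\cite{P5}); (ii) on the truncated graph, every surviving node has degree $\leq k_0$, so it is retained (as highly vulnerable) with probability $\rho_k \geq \lambda_1/\lambda$; (iii) apply the coupling/domination argument to conclude the highly vulnerable nodes among the low-degree nodes still contain an infinite component a.s. I would cite the relevant lemmas from~\cite{P5} for steps (i) and (iii) rather than reproving them, since the continuous model enters only through the formula for $\rho_k$, and the combinatorial percolation core is identical.

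For the second assertion, I would argue as follows. Suppose the infinite component $\mathcal{C}$ of highly vulnerable nodes exists, and the initial failed node is either in $\mathcal{C}$ or adjacent to a node of $\mathcal{C}$. A highly vulnerable node fails as soon as any one of its neighbors fails (this is immediate from Definition~2: if neighbor $i$ fails then $\sum_{k \in \mathcal{K}} S_k \leq k - 1 < k - \min_i S'_i < \phi$ — in fact $\sum_{k \in \mathcal{K}} S_k \le \sum_{k \in \mathcal{K}, k \ne i} S'_k \le (k-1) \le k - \min_i\{S_i'\} < \phi$, using $S_k \le S'_k$ and $\min_i S'_i \le 1$). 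Hence if one node of $\mathcal{C}$ has failed, its neighbors in $\mathcal{C}$ fail, then their neighbors in $\mathcal{C}$ fail, and by induction along paths in the connected graph $\mathcal{C}$ every node of $\mathcal{C}$ eventually fails. Since $\mathcal{C}$ is infinite, this produces an infinite component of failed nodes, which is precisely a cascading failure by Definition~1. I would present this propagation step carefully as an induction on graph distance within $\mathcal{C}$, noting that the only property of highly vulnerable nodes used is the ``one failed neighbor suffices'' property, which holds deterministically given the highly-vulnerable condition and $S'_j \le 1$; no further probabilistic input is needed here.
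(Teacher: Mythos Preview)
Your proposal is correct and follows essentially the same route as the paper: recast the question as a degree-dependent site percolation problem with retention probability $\rho_k$ given by Lemma~1, invoke the truncation-plus-thinning result from~\cite{P5} (which the paper cites as Theorem~1-(i) there) to obtain the infinite component of highly vulnerable nodes, and then use the fact that highly vulnerable nodes are in particular vulnerable to propagate the failure along $\mathcal{C}$. The only difference is that you unpack the mechanics of the~\cite{P5} result (the choice of $k_0$ via Poisson tail bounds and the coupling to a rate-$\lambda_1$ process), whereas the paper simply cites it as a black box.
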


\begin{proof}
We view the problem as a degree-dependent node failure problem where a highly vulnerable node is considered ``operational" and all other nodes are considered ``failed".  In this model, a node with degree $k$ fails with probability $1-\rho_k$ where $\rho_k$ is given by Lemma 1.  Then, by applying Theorem 1-(i) from~\cite{P5}, we obtain the sufficient condition for having an infinite component of highly vulnerable nodes in \(G(\mathcal{H}_\lambda,1)\).  Furthermore, since highly vulnerable nodes are also vulnerable, it is clear that if the initial failure is inside this component or adjacent to some node in this component, then there is cascading node failure in  \(G(\mathcal{H}_\lambda,1)\).
\end{proof}
\begin{Thm}
 For any \( G({\cal H}_\lambda,1)\) with \(\lambda > \lambda_c\), if
 \begin{small}
 \begin{equation*}
\displaystyle \sum_{k = 0} ^{\infty} \frac{(\frac{\lambda}{2})^k}{k!} e^{-(\frac{\lambda}{2})} \bigg( \displaystyle \sum_{m = 0} ^{\infty} \frac{[\lambda(2\sqrt2 + \pi)]^m}{m!} e^{-\lambda(2\sqrt2 + \pi)} 
\end{equation*}
\end{small}
\begin{small}
\begin{equation}
\times \bigg(1 - \bigg[1 - \int _{0} ^{\infty}F_{s'}(\varphi)f_\phi(\varphi)d\varphi \bigg]^{(m+k-1)k} \bigg) \bigg)< \frac{1}{27}
\label{eq:nocascade}
\end{equation}
\end{small}
Then with probability 1, there is no infinite component of weak nodes. As a consequence, with probability 1, there is no cascading
failure in \( G(\mathcal{H}_\lambda,1)\) no matter where the initial failure is. 
\end{Thm}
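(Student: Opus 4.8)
The plan is to argue in two stages: first a purely combinatorial reduction showing that the absence of an infinite component of weak nodes already forbids cascading failure, and then a percolation argument establishing that absence under the displayed hypothesis.

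\emph{Reduction.} Suppose, for contradiction, that a cascading failure occurs, so there is an infinite connected component $C$ of failed nodes containing the initial seed $v_0$ (all failed nodes lie in one component with $v_0$, by the definition of a cascade as a sequence of failures triggered by $v_0$). List the failures as $v_0,v_1,v_2,\dots$; for each $t\ge 1$ the node $v_t$ had at least one already-failed neighbor when it failed, so fixing one such neighbor as a parent $\pi(v_t)=v_s$ with $s<t$ turns $C$ into a rooted spanning tree $T$ whose edges are edges of $G(\mathcal H_\lambda,1)$. Since $C$ is infinite and the RGG is a.s. locally finite, König's lemma produces an infinite self-avoiding ray $v_0=u_0,u_1,u_2,\dots$ in $T$. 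For $i\ge 1$ the node $u_{i+1}$ is a neighbor of $u_i$ that failed strictly after $u_i$; were $u_i$ highly reliable it would still be functional at the moment $u_i$ failed, since a highly reliable node stays functional while any neighbor is functional — a contradiction. Hence $u_1,u_2,\dots$ is an infinite connected set of weak nodes, contradicting the assumed absence of an infinite weak component. This argument does not use the location of $v_0$, which yields the ``no matter where the initial failure is'' clause.

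\emph{No infinite weak component.} I would cast this as a degree-dependent node failure problem in which a \emph{weak} node plays the role of ``failed'' and a highly reliable node the role of ``operational''; by Lemma~1 a degree-$k$ node is weak with probability $1-\sigma_k$, which is nondecreasing in $k$, and then invoke the non-occurrence counterpart in~\cite{P5} of the result used to prove Theorem~1 above. Its hypothesis, specialized here, reduces to~\eqref{eq:nocascade} as follows. Tessellate $\mathbb{R}^2$ into squares of area $1/2$ (side $1/\sqrt2$, so that two nodes of a common square are always adjacent in $G(\mathcal H_\lambda,1)$) and call a square \emph{bad} if it contains a weak node. Condition on the number $k$ of its nodes, which is Poisson$(\lambda/2)$, and on the number $m$ of nodes within distance $1$ of the square but outside it, which is Poisson$(\lambda(2\sqrt2+\pi))$ — the area of the radius-$1$ fattening of a side-$1/\sqrt2$ square minus the square — and independent of $k$. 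Then every node of the square has degree at most $m+k-1$; the events ``node $j$ is highly reliable'' are monotone (increasing in the $S'$'s, decreasing in the $\phi$'s), hence positively associated by the FKG inequality, so $\Pr(\text{all } k \text{ square-nodes highly reliable}\mid k,m)\ge \sigma_{m+k-1}^{\,k}=\big[1-\int_0^\infty F_{S'}(\varphi)f_\phi(\varphi)\,d\varphi\big]^{(m+k-1)k}$, using that $\sigma_d$ is nonincreasing in $d$. Taking complements and averaging over $k$ and $m$ bounds $\Pr(\text{bad square})$ by the left-hand side of~\eqref{eq:nocascade}. An infinite component of weak nodes forces an infinite cluster of bad squares in the associated coarse lattice, and since ``bad'' depends only on the configuration in a bounded neighborhood the bad squares are finitely dependent; a Peierls-type contour count then shows that a per-square probability below $1/27$ makes such an infinite cluster have probability zero.

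The main obstacle is precisely the finite-range \emph{dependence} among the ``bad'' events: they are not independent, so the Peierls argument must pair the count of self-avoiding paths in the coarse lattice with the extraction, along each path, of a sub-collection of squares spaced far enough apart to be mutually independent, and it is this bookkeeping that pins down the constant $1/27$ together with the geometric constants $1/2$ and $2\sqrt2+\pi$. If one instead cites~\cite{P5} verbatim, the residual work is to check that the weak-node failure probability $1-\sigma_k$ and the exponent $(m+k-1)k$ fit that theorem's hypothesis — which is exactly the monotonicity of $\sigma_k$ plus the association bound above — and to confirm the two elementary facts used in the reduction (all failed nodes form one component with the seed, and $G(\mathcal H_\lambda,1)$ is a.s. locally finite).
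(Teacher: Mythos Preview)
Your proposal is correct and follows the same two–stage architecture as the paper: first deduce ``no cascade'' from ``no infinite weak component,'' then establish the latter by a degree–dependent percolation argument on the square tessellation. The paper's own proof, however, outsources both stages to~\cite{P5}: for the percolation step it simply notes that $\sigma_k$ is non-increasing in $k$ and invokes Theorem~1-(ii) of~\cite{P5} (with weak nodes labeled \emph{operational} rather than failed, opposite to your convention but immaterial); for the reduction it passes through the intermediate set of \emph{unreliable} nodes (weak $\supseteq$ unreliable, hence no infinite unreliable component) and then cites Theorem~2(ii) of~\cite{P5}. Your K\"onig-lemma ray argument bypasses that intermediate and proves the reduction directly for weak nodes, and your FKG/Peierls sketch unpacks what~\cite{P5} does behind the scenes --- including the association bound that produces the exponent $(m+k-1)k$ and the geometric constants $\tfrac12$, $2\sqrt2+\pi$, $\tfrac{1}{27}$. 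So the route is not genuinely different, but your version is more self-contained; the trade-off is that the finite-dependence bookkeeping you flag as the ``main obstacle'' is exactly what the paper avoids by citing~\cite{P5} verbatim.
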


\begin{proof}
We view the problem as a degree-dependent node failure problem where weak nodes are considered ``operational" and highly reliable nodes are ``failed."  In this model, a node with degree $k$ fails with probability $\sigma_k$ where $\sigma_k$ is given by Lemma 1.  Note that $\sigma_k$ is non-increasing in $k$.  Then, by Theorem 1-(ii) of~\cite{P5}, when~\eqref{eq:nocascade} holds, with probability 1, there is no infinite component of weak nodes.  Since the set of highly reliable nodes is a subset of the reliable nodes, the set of weak nodes is a superset of the unreliable nodes.  Thus, when~\eqref{eq:nocascade} holds, with probability 1, there is no infinite component of unreliable nodes.  By Theorem 2(ii) of~\cite{P5}, this implies that there is no cascading failure in  \( G(\mathcal{H}_\lambda,1)\) no matter where the initial failure is. 
\end{proof}

We now analytically evaluate the above conditions for the occurrence and non-occurrence of cascading failure within the context of an illustrative example. 

\begin{Exam}

Consider \(G(\mathcal{H}_\lambda,1)\) with \(\lambda > \lambda_c\).   Assume the initial node states $\{ S_j\}$ are i.i.d. uniform random variables on $(0,1]$.  Furthermore, assume the node thresholds $\{ \phi_j\}$ are i.i.d. exponential random variables with parameter $\mu > 0$ (mutually independent of the $S_j$'s).   

\end{Exam}

We first use Theorem 1 to find a sufficient condition for cascading failure for Example 1.   Thus, we seek the existence of a $k_0$ such that 
\begin{small}
\begin{equation*}
\bigg[1-\int _{0} ^{k}F_{S'}(k-\varphi)f_\phi(\varphi)d\varphi \bigg]^{k} \geq \frac{\lambda_1}{\lambda}
\end{equation*}
\end{small}
for all \(1 \leq k \leq k_0\). Since \(k \geq 1\), we have
\begin{small}
\begin{align*}
& \bigg[1-\int _{0} ^{k}F_{S'}(k-\varphi)f_\phi(\varphi)d\varphi \bigg]^{k} \\
= & \bigg[1-\int_{0}^{k-1}F_{S'}(k-\varphi)f_\phi(\varphi)d\varphi \\
 - & \int_{k-1}^{k}F_{S'}(k-\varphi)f_\phi(\varphi)d\varphi \bigg]^k \\
= & \bigg[1-\int _{0} ^{\infty}\mu e^{-\mu \varphi}d\varphi  - \int_{k-1}^{k}(k-\varphi)\mu e^{-\mu\varphi}d\varphi \bigg]^k \\
= & \left(\frac{e^{\mu}-1}{\mu}\right)^ke^{-\mu k^2}.
\end{align*}
\end{small}

This leads to the following simple and intuitive condition for having an infinite component of highly vulnerable nodes within the setting of Example 1, which in turn implies the existence of cascading failure if the initial failure is inside this component or adjacent to some node in this component. 
\begin{Prp}
Within the setting of Example 1, if \(\mu < m'(\frac{\lambda_1}{\lambda})\), where \( m'(\frac{\lambda_1}{\lambda})\) is the solution to the equation \((\frac{1 - e^{-x}}{x}) = \frac{\lambda_1}{\lambda}\), then an infinite component of highly vulnerable nodes exists in \(G(\mathcal{H}_\lambda,1)\).
\end{Prp}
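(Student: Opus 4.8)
The plan is to obtain the Proposition as a direct instance of Theorem~1, once the hypothesis of that theorem has been checked for the distributions of Example~1. Fix any \(\lambda_1 \in (\lambda_c,\lambda)\) and let \(k_0 = k_0(\lambda,\lambda_1)\) be the constant furnished by Theorem~1. By the computation immediately preceding the Proposition, Lemma~1 specialises here to
\[
\rho_k \;=\; \Bigl(\tfrac{e^{\mu}-1}{\mu}\Bigr)^{k} e^{-\mu k^{2}}, \qquad k \ge 1 .
\]
So everything reduces to showing that \(\mu < m'(\lambda_1/\lambda)\) forces \(\rho_k \ge \lambda_1/\lambda\) for every \(1 \le k \le k_0\); Theorem~1 then yields the infinite component of highly vulnerable nodes (and, when the seed lies in or next to it, a cascade).

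First I would dispose of the case \(k = 1\), where \(\rho_1 = \tfrac{e^{\mu}-1}{\mu}\,e^{-\mu} = \tfrac{1-e^{-\mu}}{\mu}\). Writing \(g(x) := \tfrac{1-e^{-x}}{x} = \int_{0}^{1} e^{-xt}\,dt\) exhibits \(g\) as continuous and strictly decreasing on \((0,\infty)\), with \(g(0^{+}) = 1\) and \(g(\infty) = 0\); since \(\lambda_1/\lambda \in (0,1)\), the equation \(g(x) = \lambda_1/\lambda\) has a unique root \(m'(\lambda_1/\lambda) > 0\), and \(\mu < m'(\lambda_1/\lambda)\) is exactly equivalent to \(\rho_1 = g(\mu) > \lambda_1/\lambda\). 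This settles the \(k = 1\) instance of the hypothesis of Theorem~1 and accounts for the shape of the stated condition.

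The remaining instances \(2 \le k \le k_0\) are the main obstacle. The first thing to record is the behaviour of \(k \mapsto \rho_k\): since \(\tfrac{e^{\mu}-1}{\mu} < e^{\mu}\), one has \(\rho_{k+1}/\rho_k = \tfrac{e^{\mu}-1}{\mu}\,e^{-\mu(2k+1)} < 1\) for all \(k \ge 1\), so \((\rho_k)_{k \ge 1}\) is strictly decreasing and the binding instance of the hypothesis is \(k = k_0\). Hence one must show that \(\mu < m'(\lambda_1/\lambda)\) already guarantees \(\rho_{k_0} = \bigl(\tfrac{e^{\mu}-1}{\mu}\bigr)^{k_0} e^{-\mu k_0^{2}} \ge \lambda_1/\lambda\). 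This is where the explicit value of \(k_0(\lambda,\lambda_1)\) coming from Theorem~1-(i) of~\cite{P5} has to enter: for each fixed \(k_0\) one has \(\rho_{k_0}(\mu) \to 1\) as \(\mu \downarrow 0\), so a positive \(\mu\)-threshold with the required property exists, and the content of the Proposition is that this threshold can be taken to be \(m'(\lambda_1/\lambda)\). I expect this step -- controlling \(\rho_k\) for all degrees up to \(k_0\) by the single-degree quantity \(m'(\lambda_1/\lambda)\) -- to be the crux; with it in hand, a direct application of Theorem~1 closes the argument.
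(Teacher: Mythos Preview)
Your approach is essentially the paper's: specialise Lemma~1 to get \(\rho_k = \bigl(\tfrac{e^{\mu}-1}{\mu}\bigr)^k e^{-\mu k^2}\), observe that \(g(x)=(1-e^{-x})/x\) is strictly decreasing so that \(\mu < m'(\lambda_1/\lambda)\) is equivalent to \(\rho_1 > \lambda_1/\lambda\), note \(\rho_k\) is decreasing in \(k\), and appeal to Theorem~1.

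You are right to flag the ``crux'': Theorem~1 hands you a \emph{fixed} \(k_0 = k_0(\lambda,\lambda_1)\), and what must be checked is \(\rho_k \ge \lambda_1/\lambda\) for all \(k\) up to \emph{that} \(k_0\), not merely up to some \(k_0\) of your choosing. The paper's proof does not resolve this point either. It argues that the strict inequality \(\rho_1 > \lambda_1/\lambda\) together with continuity and monotonicity of \(k\mapsto\rho_k\) yields ``there exists \(k_0 \ge 1\)'' with \(\rho_k \ge \lambda_1/\lambda\) on \([1,k_0]\), and then invokes Theorem~1. But that existential is trivially satisfied with \(k_0=1\), and nothing in the argument ties the resulting \(k_0\) to the specific constant coming from Theorem~1 (equivalently, from Theorem~1-(i) of~\cite{P5}). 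So the gap you identify is real and is shared by the paper's own proof; no additional idea is supplied there to control \(\rho_{k_0}\) via the single threshold \(m'(\lambda_1/\lambda)\).
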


\begin{proof} 
Since \((\frac{1 - e^{-x}}{x})\) is monotonically decreasing in \(x\), if  \(\mu < m'(\frac{\lambda_1}{\lambda})\), we have \( \frac{1 - e^{-\mu}}{\mu} > \frac{\lambda_1}{\lambda}\).  Thus, \( (\frac{e^{\mu}-1}{\mu})^ke^{-\mu k^2} > \frac{\lambda_1}{\lambda}\) for \(k =1\). Since \( (\frac{e^{\mu}-1}{\mu})^ke^{-\mu k^2}\) is decreasing and continuous in \(k \geq 1\) for \(\mu > 0 \), there exists \(k_0 \geq 1\) such that \((\frac{e^{\mu}-1}{\mu})^ke^{-\mu k^2} \geq \frac{\lambda_1}{\lambda}\) for all \(1 \leq k \leq k_0\).  Thus, \(\mu < m'(\frac{\lambda_1}{\lambda})\) is a sufficient condition for having an infinite component of highly vulnerable nodes. 
\end{proof}

Note that the result of Proposition 1 is intuitively reasonable.  When $\mu$ is sufficiently small, the probability of a node having a high threshold is large, and therefore cascading failure becomes likely.   

Next, we use Theorem 2 to find a sufficient condition for not having an infinite component of weak nodes, and therefore not having cascading failure, in the setting of Example 1.  For this, we fix the density to be \(\lambda = 4\).\footnote{Note that the best known analytical upper bound on \(\lambda_c\) is 3.372.}   The sufficient condition reduces to
\begin{small}
\begin{equation*}
\displaystyle \sum_{k = 0} ^{\infty} \frac{(\frac{4}{2})^k}{k!} e^{-(\frac{4}{2})} \bigg( \displaystyle \sum_{m = 0} ^{\infty} \frac{[4(2\sqrt2 + \pi)]^m}{m!} e^{-4(2\sqrt2 + \pi)} 
\end{equation*}
\begin{equation*}
\times \bigg(1 - \left[1 - \int _{0} ^{\infty}F_{S'}(\varphi)f_\phi(\varphi)d\varphi\right]^{(m+k-1)k} \bigg) \bigg) = 
\end{equation*}

\begin{equation*}
\displaystyle \sum_{k = 0} ^{\infty} \frac{(\frac{4}{2})^k}{k!} e^{-(\frac{4}{2})} \bigg( \displaystyle \sum_{m = 0} ^{\infty} \frac{[4(2\sqrt2 + \pi)]^m}{m!} e^{-4(2\sqrt2 + \pi)} 
\end{equation*}

\begin{equation}
\times \bigg(1 - \left[1 + \frac{e^{-\mu} -1}{\mu} \right]^{(m+k-1)k} \bigg) \bigg) < \frac{1}{27}.
\end{equation}
\end{small}


Since \(1 - (1 + \frac{e^{-\mu} -1}{\mu})^{(m+k-1)k}\) is decreasing in \(\mu\), the LHS of (1) is decreasing in \(\mu\).  The LHS of (2) is smaller than \(\frac{1}{27}\) for \(\mu \geq 1357\) \( (\lambda = 4) \).  Thus for \(\mu \geq 1357\), there is no infinite component of weak nodes, and therefore no cascading failure.  Again, this result is intuitively reasonable.  When $\mu$ is sufficiently large, the probability of a node having a high threshold is small, and therefore cascading failure becomes unlikely.    We will evaluate the above conditions on $\mu$ through numerical simulation in Section~\ref{sec:simulations}. 


\section{Numerical experiments}

\label{sec:simulations}


In this section, we experimentally evaluate Example~1.  The initial network is a finite RGG $G({\cal X}_n, 1)$ with density $\lambda = 4$ and $n= 1600$ nodes independently and uniformly distributed in $[0, 20]^2$.  Initial node states are i.i.d uniformly distributed on (0,1] and node thresholds are i.i.d exponentially distributed with parameter \(\mu\).  Figures 1 and 2 illustrate the sufficient condition for the existence of a GC of highly vulnerable nodes and thus cascading node failure. 
We choose $\lambda_c < \lambda_1 = 3.85 < \lambda$. For these values, according to Proposition 1, $\mu < 0.076$ is a sufficient condition for the existence of a GC of highly vulnerable nodes. Figure 1 shows the presence of GC of highly vulnerable nodes (represented by solid nodes) when \(\mu = 0.075\). Figure 2 depicts the cascading failure after the initial failure with \(\mu = 0.075\), where empty nodes represent failed nodes and the location of the initial failure is shown by an arrow.

Figure 3 and 4 illustrate the sufficient condition for not having a GC of weak nodes.  The condition for Example~1 states that if \(\mu > 1357\), there is no GC of weak nodes in the network.  Figure 3 shows the same network with \(\mu = 1360\) where empty nodes represent the weak nodes and solid nodes represent highly reliable nodes. Figure 4 shows the described network after the initial failure where solid nodes are functional and empty nodes are failed.  As expected from a network with no GC of weak nodes, there is no cascading failure.

\begin{figure}[!htb]
\minipage{0.45\textwidth}
  \includegraphics[width=\linewidth]{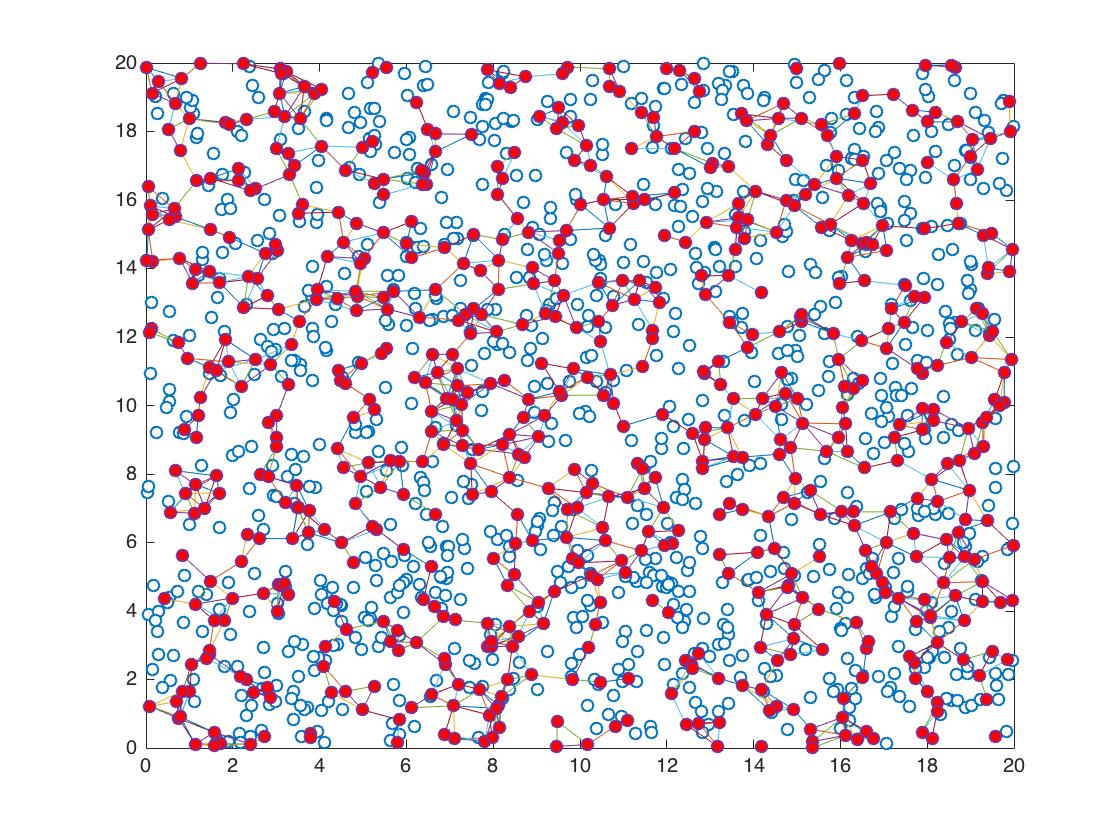}
  \caption{\(\mu = 0.075 \) ; GC of highly vulnerable nodes (solid)}
\endminipage\hfill
\minipage{0.45\textwidth}
  \includegraphics[width=\linewidth]{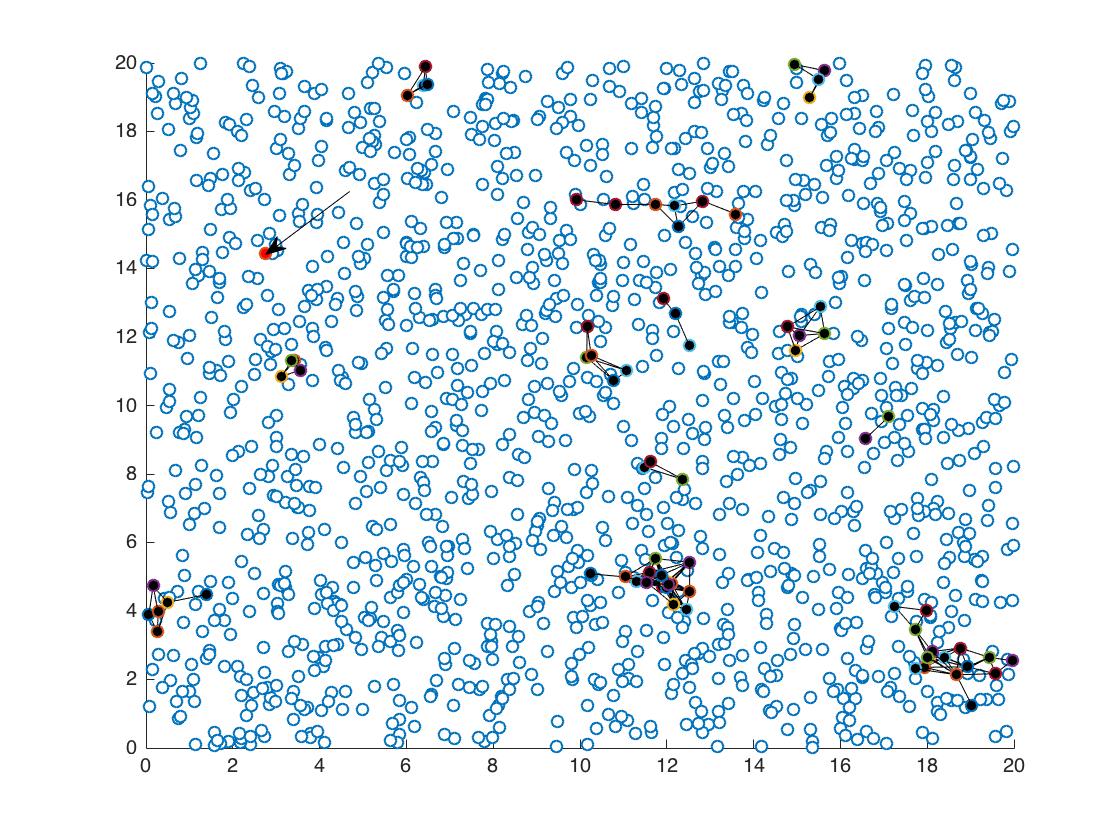}
  \caption{\(\mu = 0.075 \) ; cascading failure ; initial failure indicated by arrow; functional nodes are solid}
\endminipage
\end{figure}



\begin{figure}[!htb]
\minipage{0.45\textwidth}
    \includegraphics[width=\linewidth]{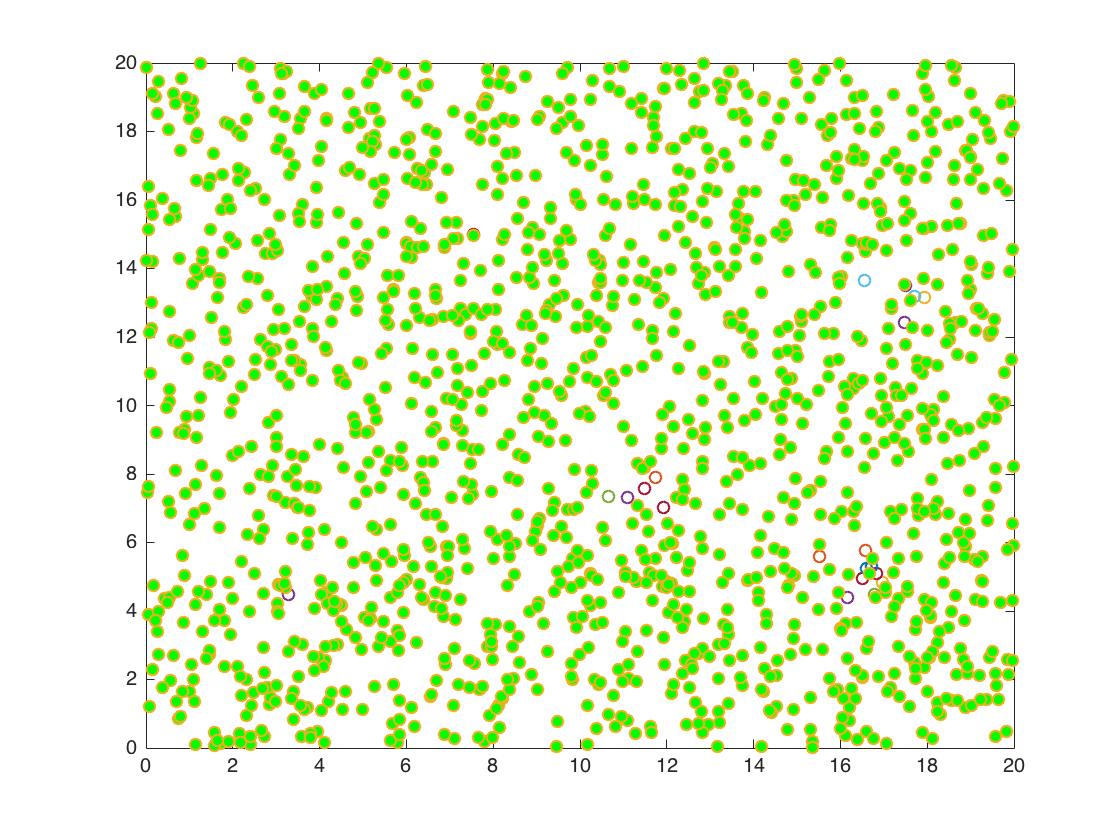}
  \caption{\(\mu = 1360 \) ; no GC of weak nodes (empty)}
\endminipage\hfill
\minipage{0.45\textwidth}
    \includegraphics[width=\linewidth]{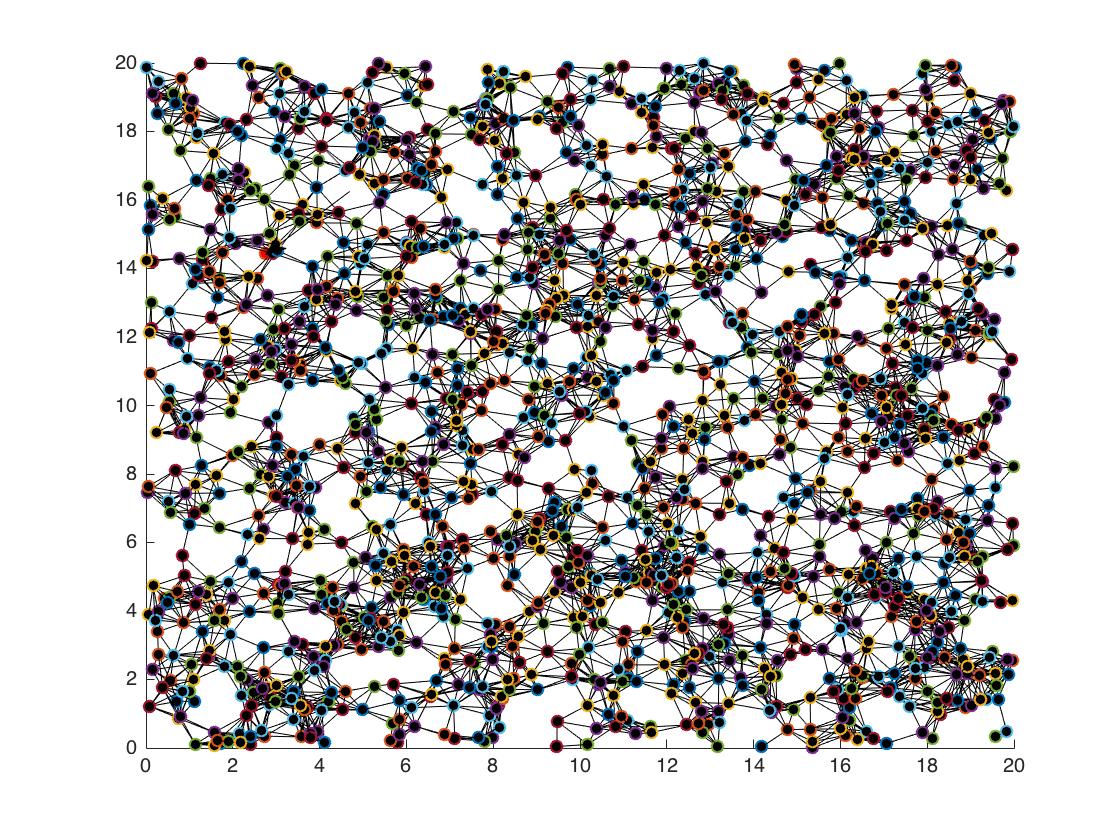}
  \caption{\(\mu = 1360 \) ; no cascading failure ; functional nodes are solid}
\endminipage
\end{figure}

\section{Conclusion}
We introduced and analyzed a model for cascading node failure in random geometric networks where nodes have a continuously valued (rather than simply binary) state.  The concepts of highly vulnerable and highly reliable nodes were introduced to obtain an analytical necessary condition and an analytical sufficient condition for the occurrence and non-occurrence of cascading failure, respectively.  These analytical conditions were experimentally verified through numerical simulation.  Although we focused on a single geometric network, the continuous model is also useful for analyzing cascading failure in interdependent geometric networks where nodes in one network provide resources (e.g. power or bandwidth) to nodes in the other network.  The application of this model to the interdependent case will be treated in future work.

\bibliography{Globecom2016Cascade_vf_EY}{}
\bibliographystyle{unsrt}
\end{document}